\newcommand{\arxiv}[1]{\href{http://arxiv.org/abs/#1}{\texttt{arXiv:#1}}}
\theoremstyle{plain}
\newtheorem*{theorem*}{Theorem}
\begin{document}

\title{\bf On limits of dense packing of equal spheres in a cube}
\author{Milos Tatarevic\\
\small Alameda, CA 94501, U.S.A.\\
\small\tt milos.tatarevic@gmail.com\\
}

\date{\dateline{Oct 8, 2013}{Jan 27, 2015}{Feb 16, 2015}\\
\small Mathematics Subject Classifications: 52C17, 05B40}

\maketitle


\begin{abstract}
We examine packing of $n$ congruent spheres in a cube when $n$ is
close but less than the number of spheres in a regular cubic close-packed
(ccp) arrangement of $\lceil p^{3}/2\rceil$ spheres. For this
family of packings, the previous best-known arrangements were usually
derived from a ccp by omission of a certain number of spheres without
changing the initial structure. In this paper, we show that better
arrangements exist for all $n\leq\lceil p^{3}/2\rceil-2$. We introduce
an optimization method to reveal improvements of these packings, and
present many new improvements for $n\leq4629$.
\end{abstract}


\section{Introduction}

We consider the problem of finding the densest packings of congruent, non-overlapping, spheres in a cube.
Equivalently, we can search for an arrangement of points inside a unit cube so that the minimum
distance between any two points is as large as possible.
The maximum separation distance of $n$ points in $[0,1]^{3}$ we denote by $d_{n}$.

To our knowledge, the optimality of $d_{n}$ is proved for $n=2,3,4,5,6,8,9$
\cite{schaer}, $n=10$ \cite{schaer_10} and $n=14$ \cite{joos}.
Optimality of $d_{n}$ was conjectured for an infinite family of packings
where $\lceil p^{3}/2\rceil$ spheres are arranged in a cubic
close-packed (ccp) structure \cite{goldberg}. We denote by $g(p)=\lceil p^{3}/2\rceil$
the number of spheres in these packings, with a maximum separation
distance denoted by $d'_{p}=\sqrt{2}/\!\left(p-1\right)$.

In this paper, we examine arrangements when $n$ is close, but less
than $g(p)$. For this family of packings $d_{n}=d'_{p}$ is often
assumed, to mean that the densest known arrangements are derived from
ccp by omission of a certain number of spheres without changing the
initial structure. Limiting values though were not provided. It was
conjectured that $d_{n}$ is constant in the range $12\leq n\leq14$
\cite{goldberg} and $29\leq n\leq32$ \cite{gensane}. A better
packing was found for $n=12$ \cite{gensane}. Similarly, previous
search results showed the same trend for $60\leq n\leq63$, $103\leq n\leq108$,
$\ldots$, $817\leq n\leq864$ \cite{wenqi,locatelli_de,packomania}.

In Section 2 we show that most of the listed packings can be improved
by proving that $d_{n}>d'_{p}$ for all $n\leq g(p)-2$. We also provide
a lower bound for these improvements. In Section 3, we introduce an
optimization method and improve the lower bound for
$4\leq p\leq21$. We show that the described procedure can be used as a good packing method
when $n$ is slightly smaller than $g(p)$. We run search to
determine improvements for other packings when $n=g(p)-r$,
for $3\leq r\leq6$, $r<p$ and $4\leq p\leq12$.


\section{Existence of improved packings}

To simplify our notation, we will assume that the radius of all spheres
in the packing is 1 and that our task is to determine the smallest
size of a cube that contains all spheres.

We denote the family of all finite
sets of points such that the distance between any two points is at least 2 by 
\[
\mathcal{F}=\left\{ S\subset\mathbb{R}^{3}:\Vert s_{1}-s_{2}\Vert \geq 2\text{ for all distinct } s_{1},s_{2}\in S \right\}.
\]

Let $S_{n}=\left\{ s_{1,}\ldots,s_{n}\right\} \in\mathcal{F}$ and
let $D_{c}(s_{i},s_{j})$ be the Chebyshev distance between any two
points $s_{i},s_{j}\in S_{n}$. The smallest edge length of a cube,
with edges parallel to the axes, such that it contains all points
$S_{n}$ is equal to 
\[
D(S_{n})=\max\left\{ D_{c}(s_{i},s_{j}):1\leq i<j\leq n\right\}.
\]

We notice that the maximum separation distance $d_{n}$ can also be
given as 
\[
d_{n}=\max\frac{2}{D(S_{n})}.
\]

\begin{theorem*}
The maximum separation distance of $n$ points contained in a closed
region bounded by a unit cube is larger than $\sqrt{2}/\left(p-1\right)$
for all $n\leq\lceil p^{3}/2\rceil-2$.\end{theorem*}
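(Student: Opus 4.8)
The plan is to prove the equivalent metric statement: for every $n\le g(p)-2$ there is a set $S_n\in\mathcal{F}$ of $n$ points with $D(S_n)<\sqrt{2}\,(p-1)$. Indeed $d_n=\max 2/D(S_n)$, and the ccp value $d'_p=\sqrt{2}/(p-1)$ corresponds to an enclosing cube of edge exactly $\sqrt{2}\,(p-1)$, so $d_n>d'_p$ is the same as producing a configuration whose Chebyshev diameter is strictly smaller than that. First I would reduce to the single case $n=g(p)-2$: if a configuration of $g(p)-2$ points lies in a cube of edge $L<\sqrt{2}\,(p-1)$, then deleting points leaves the rest in the same cube, so $D(S_m)\le L$ for every $m\le g(p)-2$ and the strict inequality propagates downward. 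Thus it suffices to exhibit one admissible configuration of exactly $g(p)-2$ points whose diameter lies below $\sqrt{2}\,(p-1)$.

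Second, I would set up the ccp explicitly as the scaled face-centred grid $\sqrt{2}\,(i,j,k)$ with $0\le i,j,k\le p-1$ and $i+j+k$ even; this has $g(p)$ points, pairwise distances at least $2$ (with many equal to $2$), and all three coordinate ranges equal to $\sqrt{2}\,(p-1)$, so its enclosing cube is tight and realises $d'_p$. The conceptual point I would stress is that this packing is \emph{rigid}: the governing quantity, the maximum of the three coordinate ranges, is pinned simultaneously by the two full extremal layers in each axis direction, each already at the minimal admissible spacing from its neighbour. Hence no bounded local perturbation—in particular no relaxation around two deleted spheres—can decrease any extent, and the improved packing cannot be a near-complete ccp with a couple of spheres omitted. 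The sole role of the hypothesis $n\le g(p)-2$ is therefore that a \emph{globally} restructured packing carrying two fewer points than the full ccp is permitted to occupy the slightly smaller cube.

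Third, the construction I would build is such a restructured close packing of $g(p)-2$ points: keeping the dense interior essentially fcc while re-forming all extremal boundary layers with a slightly altered registry, so that every coordinate range drops from $\sqrt{2}\,(p-1)$ to $\sqrt{2}\,(p-1)-\varepsilon(p)$, spending exactly the two freed points to absorb the commensurability mismatch this creates at the boundary. The parity of $p$ forces separate bookkeeping, since it decides which corners and faces are occupied. I would then (i) verify the point count is exactly $g(p)-2$, (ii) check that every pairwise distance remains at least $2$, and (iii) read off $\varepsilon(p)>0$, yielding the advertised lower bound on $d_n-d'_p$. The main obstacle is step (ii) together with uniformity in $p$: because all three extents are tied and the bulk is rigid, the rearrangement must shave all three faces at once, and one must certify that no pair among the $\Theta(p^2)$ relocated boundary spheres—nor any boundary–interior pair—drops below distance $2$, for every $p$. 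I expect the attainable $\varepsilon(p)$ to be small and to decay with $p$, which is exactly why the theorem provides only a lower bound and why the stronger improvements of Section~3 require the numerical optimization developed there.
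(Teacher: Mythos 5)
Your reduction to the single case $n=g(p)-2$ and your framing of the task (exhibit a set of $g(p)-2$ points in $\mathcal{F}$ whose Chebyshev diameter is strictly below $\sqrt{2}(p-1)$) are correct and match the paper. But the heart of the proof --- the actual configuration --- is never supplied, and the structural principle you use to guide its design is wrong. You argue that because each extremal layer of the ccp sits at minimal spacing from its neighbour, ``no relaxation around two deleted spheres can decrease any extent,'' and you conclude that the improved packing cannot be a near-complete ccp with a couple of spheres omitted. The paper's construction is exactly such a relaxation: it deletes two spheres from the corner cell $G_2$, replaces the remaining pair by a rearranged $P_2=\left\{\langle 0,0,a\rangle,\langle b,b,0\rangle\right\}$ with $a$ the positive root of $a^4+4a^3+8a^2-8=0$, and then lets the slack so created cascade outward shell by shell: writing $L_p=G_p\setminus G_{p-1}$, each shell is split into three pieces $L_{p,1},L_{p,2},L_{p,3}$ that are translated toward the origin corner by explicitly computed amounts $\tau_1(p)>\tau_2(p)>\tau_3(p)>0$, defined recursively (with $\tau_1(p)$ depending on $\tau_3(p-1)$) so that each piece just comes into contact with what is already placed. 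The near faces stay at coordinate $0$, the far faces move in by $\tau_3(p)$, and $D(P_p)=D_p-\tau_3(p)<D_p$. Outside the one rearranged corner every sphere keeps its ccp combinatorics and moves by a minuscule amount (the resulting bound is already only $I_4>1.276\cdot 10^{-79}$), so a global cascade of tiny translations seeded by one local corner rearrangement is precisely what works --- the opposite of the boundary-restructuring you infer must be necessary.

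By contrast, your plan --- keep the interior rigid and ``re-form all extremal boundary layers with a slightly altered registry,'' spending the two freed points to absorb the mismatch --- is never instantiated, and you yourself flag the pairwise-distance verification for the $\Theta(p^2)$ relocated boundary spheres, uniformly in $p$, as an unresolved obstacle. It is also implausible as stated: changing the registry of a full face layer of an fcc block generically changes the point count by $\Theta(p^2)$, not by $2$, so there is no reason the budget of two deleted spheres can pay for the boundary surgery you describe. As written, the proposal correctly identifies what must be produced but supplies neither the construction nor the distance estimates, and its one concrete structural claim points away from the construction that actually proves the theorem.
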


\begin{proof}
Let $C_{p}$ be a closed region bounded by
a cube with an edge length $D_{p}=2/d'_{p}=(p-1)\sqrt{2}$, where $C_{p}$ is defined by
$C_{p}=\left[0,D_{p}\right]^{3}$.
Let $G_{p}$ be a set of $g(p)$ sphere centers in a ccp arrangement, such that
$G_{p}\in\mathfrak{\mathcal{F}}$,
$G_{p}\subset C_{p}$ and $\langle0,0,0\rangle\in G_{p}$ (see Figure 1).

\begin{figure}[ht!]
\centering
\begin{subfigure}{.45\textwidth}
  \centering
  \includegraphics[width=.9\linewidth]{./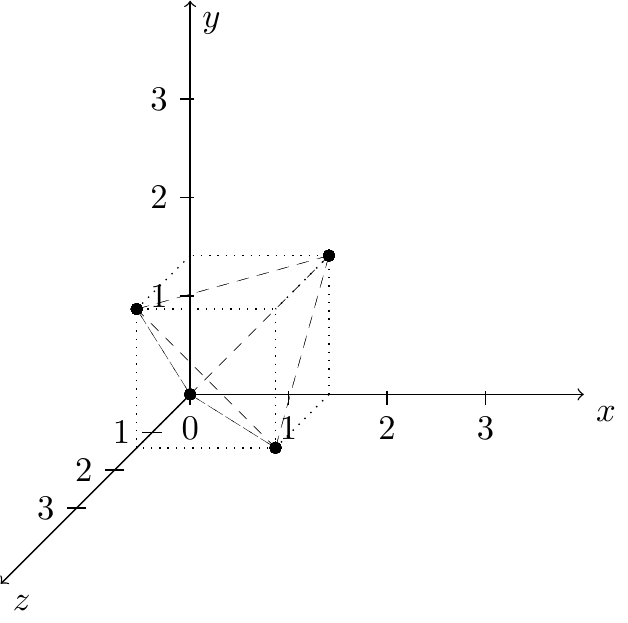}
  \caption{$G_{2}$}
  \vspace*{10mm}
\end{subfigure}%
\begin{subfigure}{.45\textwidth}
  \centering
  \includegraphics[width=.9\linewidth]{./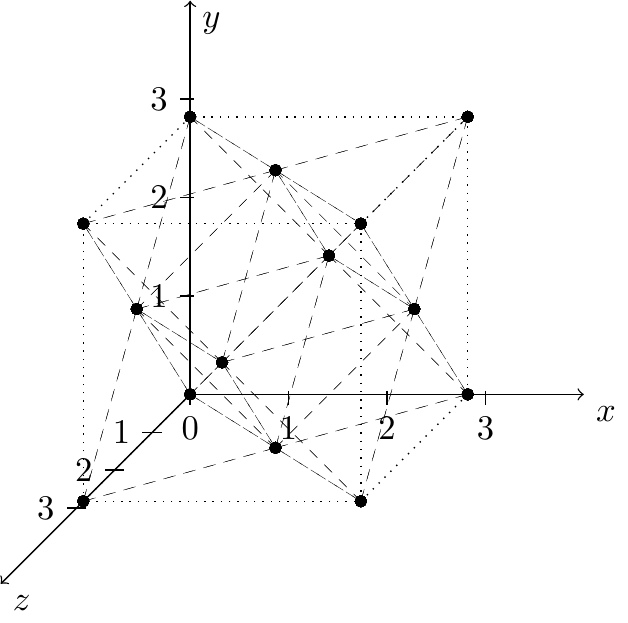}
  \caption{$G_{3}$}
  \vspace*{10mm}
\end{subfigure}
\begin{subfigure}{.6\textwidth}
  \centering
  \includegraphics[width=.9\linewidth]{./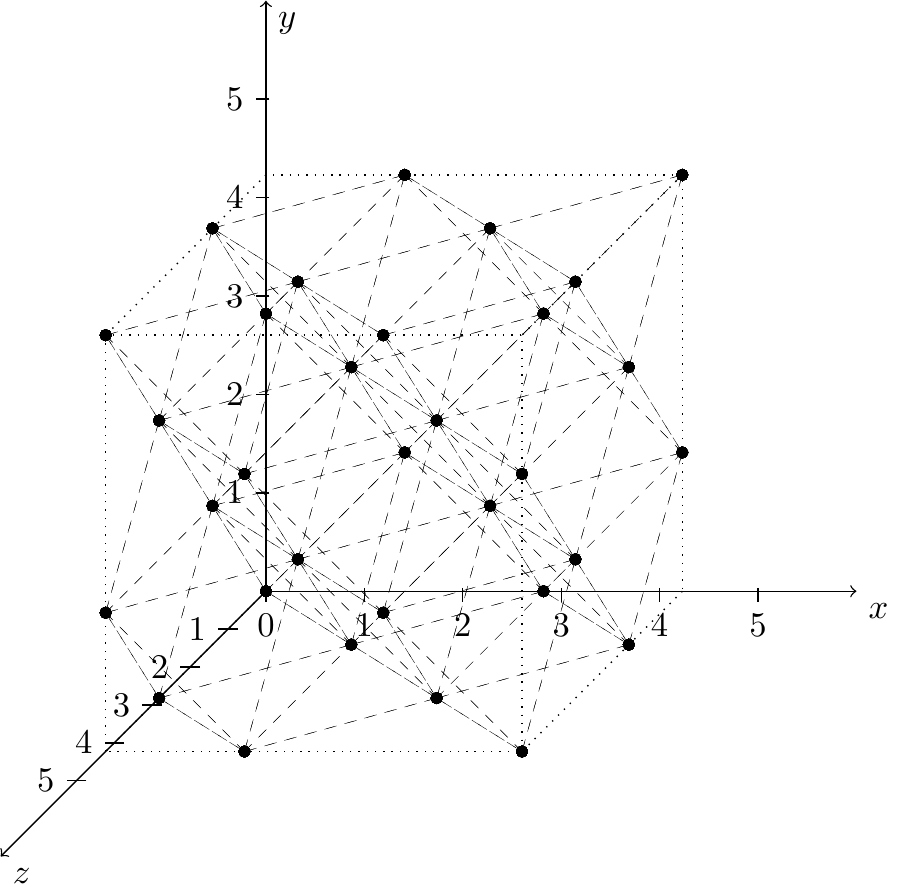}
  \caption{$G_{4}$}
\end{subfigure}
\caption{An illustration of arrangements $G_{2}$, $G_{3}$ and $G_{4}$.}
\end{figure}

For two sets of points $A,B\in\mathfrak{\mathcal{\mathcal{F}}}$ let
\[
h(A,B)=\min\left\{ \Vert a-b\Vert-2:a\in A,b\in B\right\} 
\]
and let $L_{p}=G_{p}\setminus G_{p-1}$ (see Figure 2(a) for an example). We denote the improved packing
of $g(p)-2$ points by $P_{p}=\left\{ s_{1},\ldots,s_{g(p)-2}\right\} \in\mathcal{\mathcal{F}}$
such that
\begin{equation}
P_{p}\subset C_{p}\label{eq:p_subset},
\end{equation}
\begin{equation}
h(P_{p},L_{p+1})>0\label{eq:p_separation}.
\end{equation}
Equations (\ref{eq:p_subset}) and (\ref{eq:p_separation}) directly
imply that if these statements are true, then $D(P_{p})<D_{p}$.

\begin{figure}[ht!]
\centering
\begin{subfigure}{.5\textwidth}
  \centering
  \includegraphics[width=.9\linewidth]{./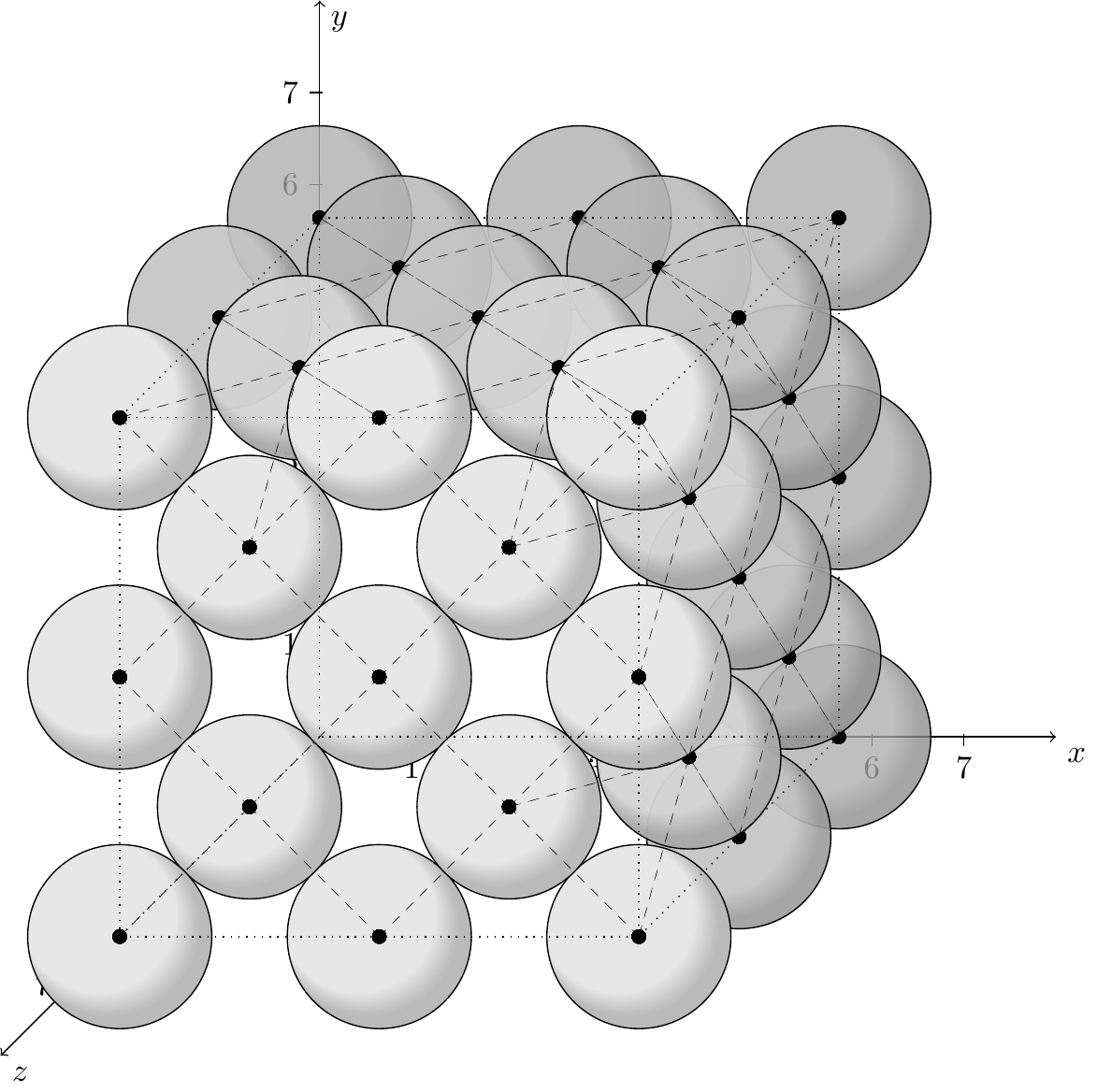}
  \caption{$L_{5}$}
  \vspace*{10mm}
\end{subfigure}%
\begin{subfigure}{.5\textwidth}
  \centering
  \includegraphics[width=.9\linewidth]{./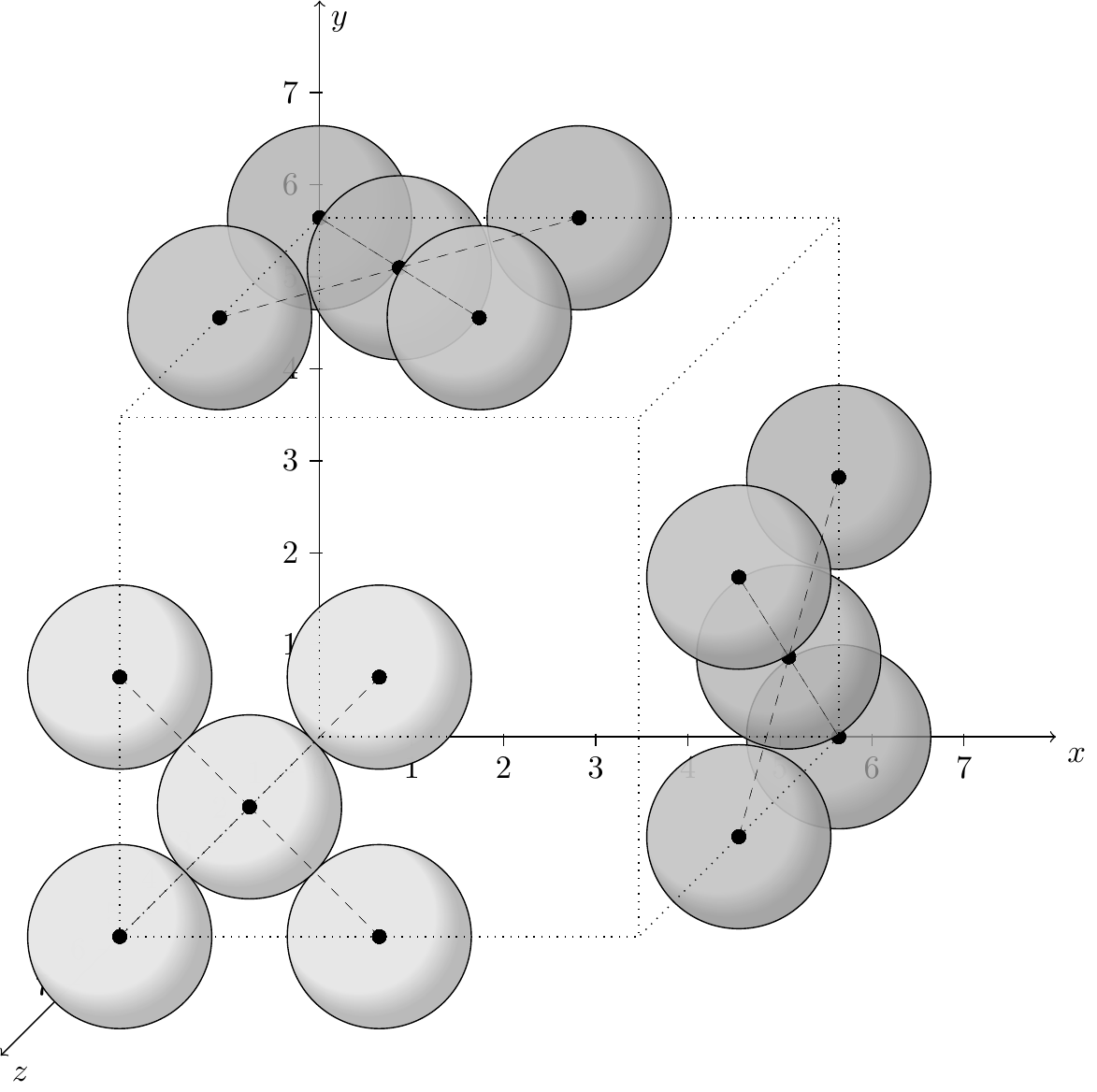}
  \caption{$L_{5,1}$}
  \vspace*{10mm}
\end{subfigure}
\begin{subfigure}{.5\textwidth}
  \centering
  \includegraphics[width=.9\linewidth]{./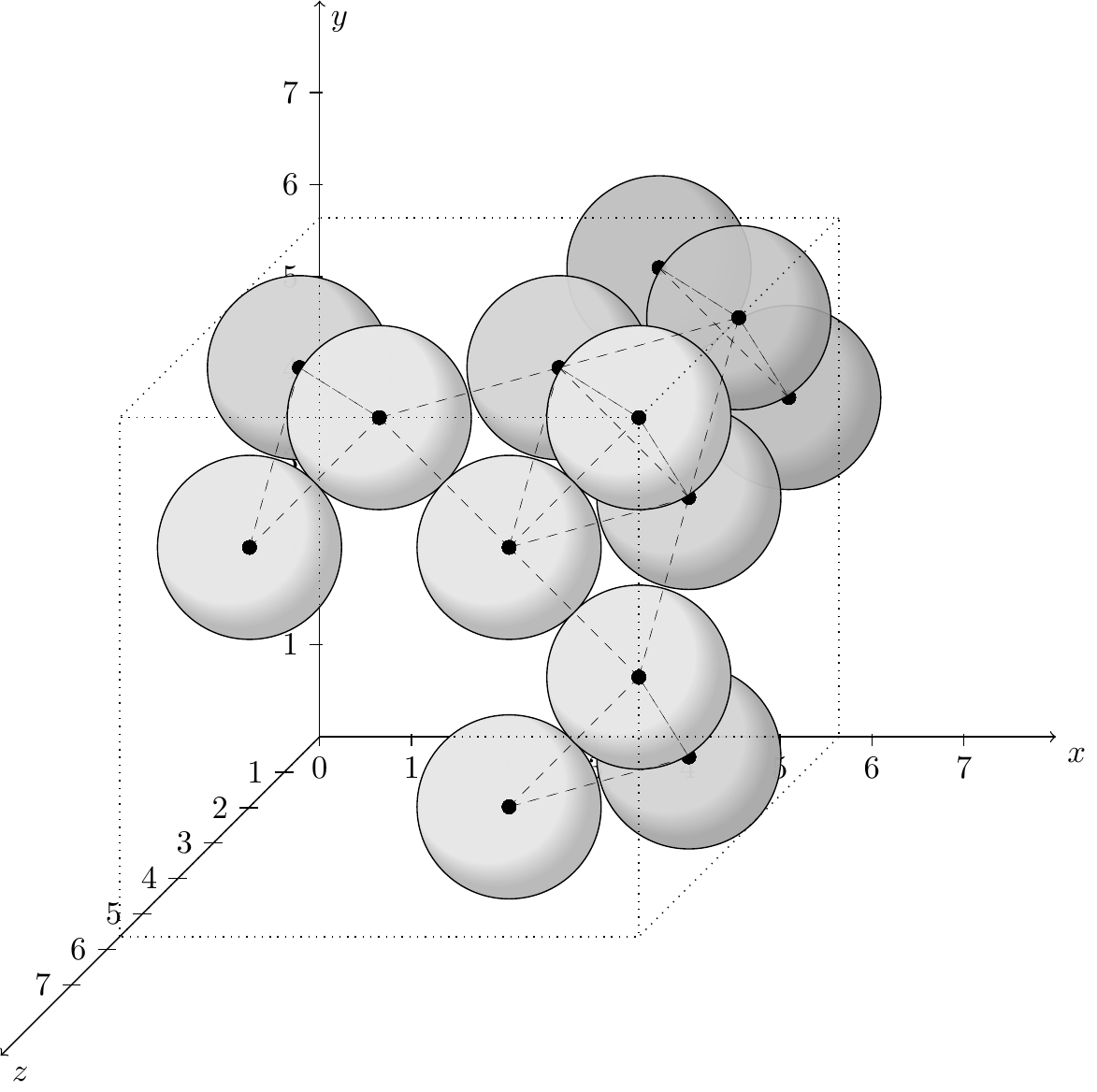}
  \caption{$L_{5,2}$}
  \vspace*{10mm}
\end{subfigure}%
\begin{subfigure}{.5\textwidth}
  \centering
  \includegraphics[width=.9\linewidth]{./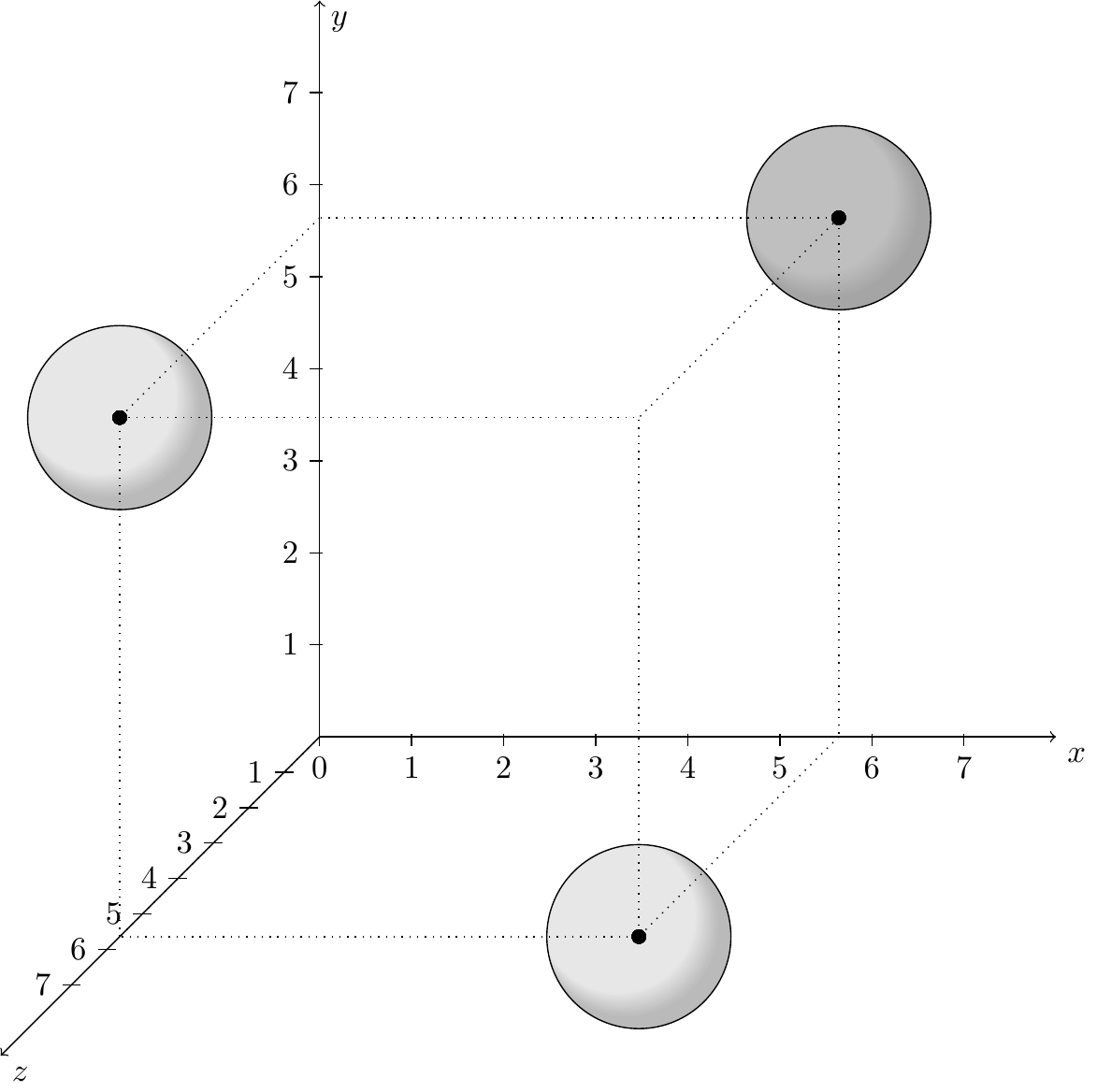}
  \caption{$L_{5,3}$}
  \vspace*{10mm}
\end{subfigure}
\caption{An illustration of arrangements $L_{5}$, $L_{5,1}$, $L_{5,2}$ and $L_{5,3}$.}
\end{figure}

Let $T_{p}=\tau(L_{p})$ be a set of points created by the translation
of points from $L_{p}$ by the function $\tau$ such that

\begin{equation}
T_{p}\in\mathcal{F},T_{p}\subset C_{p},h(P_{p-1},T_{p})\geq0\label{eq:tp_validated},
\end{equation}
\begin{equation}
h(T_{p},L_{p+1})>0\label{eq:tp_separated}
\end{equation}

\noindent for all $p>2$. If such a set $T_{p}$ exists, then we can state that
\[
P_{p}=P_{p-1}\cup T_{p}\text{, for all }p>2.
\]

We can easily show that $P_{2}$ exists by constructing it explicitly.
If we want to give $P_{2}$ in such a way to maximize the capability
to translate points from $L_{3}$ and thus produce a better packing $P_{3}$,
we have to find the positive root of the polynomial
\[
a^{4}+4a^{3}+8a^{2}-8=0,a>0\Longrightarrow a=0.818425\ldots
\]

\noindent then $P_{2}$ is given as $P_{2}=\left\{ \langle0,0,a\rangle,\langle b,b,0\rangle\right\}$,
where $b=\sqrt{2-a^{2}/2}$.

While $P_{p-1}$ and $L_{p}$ are separated (\ref{eq:p_separation})
, we can try to translate all points from $L_{p}$ and keep (\ref{eq:tp_validated}),
and (\ref{eq:tp_separated}) true. We split $L_{p}$ into three subsets,
and perform the translations on each of them while maintaining the
given conditions.

Let $L_{p}=L_{p,1}\cup L_{p,2}\cup L_{p,3}$ such that
\[
L_{p,1}=\left\{ \langle l_{1},l_{2},l_{3}\rangle\in L_{p}:l_{i},l_{j}\leq(p-3)\sqrt{2}\text{ for some distinct }i,j\in\left\{ 1,2,3\right\}\right\},
\]
\[
L_{p,2}=\left\{ \langle l_{1},l_{2},l_{3}\rangle\in L_{p}\setminus L_{p,1}:l_{1},l_{2},l_{3}>0\right\},
\]
\[
L_{p,3}=L_{p}\setminus L_{p,1}\setminus L_{p,2}.
\]

For an illustration of sets $L_{5}$, $L_{5,1}$, $L_{5,2}$ and $L_{5,3}$, see Figure 2.

Now we can give $T_{p}$ as
\[
T_{p}=T_{p,1}\cup T_{p,2}\cup T_{p,3},
\]
\[
T_{p,i}=\left\{ \langle l_{1}-u_{i,1}\tau_{i}(p),l_{2}-u_{i,2}\tau_{i}(p),l_{3}-u_{i,3}\tau_{i}(p)\rangle:\langle l_{1},l_{2},l_{3}\rangle\in L_{p,i}\right\},
\]
\[
u_{1,j}=\begin{cases}
1 & \text{if }l_{j}=\left(p-1\right)\sqrt{2}\\
0 & \text{otherwise}
\end{cases},u_{2,j}=1,u_{3,j}=\begin{cases}
1 & \text{if }l_{j}\neq0\\
0 & \text{otherwise}
\end{cases},
\]
where $\tau_{1}(p)$, $\tau_{2}(p)$ and $\tau_{3}(p)$ are small
numbers such that $\tau_{1}(p)>\tau_{2}(p)>\tau_{3}(p)$ and conditions
(\ref{eq:tp_validated}) and (\ref{eq:tp_separated}) hold. If we
additionally state that $h(P_{p-1},T_{p,1})=0$, $h(P_{p-1}\cup T_{p,1},T_{p,2})=0$
and $h(P_{p-1}\cup T_{p,1}\cup T_{p,2},T_{p,3})=0$, we can give explicit
solutions for $\tau_{1}$, $\tau_{2}$ and $\tau_{3}$ as follows
\[
\tau_{1}(3)=2\sqrt{2}-2-a,
\]
\[
\tau_{2}(p)=\frac{\sqrt{2}}{3}\left(\frac{\tau_{1}(p)}{\sqrt{2}}+2-\sqrt{-\tau_{1}(p)^{2}+2\sqrt{2}\tau_{1}(p)+4}\right)\text{, for all }p\geq3,
\]
\[
\tau_{3}(p)=\frac{\sqrt{2}}{2}\left(\sqrt{2}\tau_{2}(p)+1-\sqrt{-\tau_{2}(p)^{2}+2\sqrt{2}\tau_{2}(p)+1}\right)\text{, for all }p\geq3,
\]
\[
\tau_{1}(p)=\sqrt{2}+\tau_{3}(p-1)-\sqrt{-\tau_{3}(p-1)^{2}+2\sqrt{2}\tau_{3}(p-1)+2}\text{, for all }p>3.
\]

We notice that $D(P_{p})=D_{p}-\tau_{3}(p)$, hence the maximum separation
distance can be given as
\[
d_{n}\geq\frac{2}{\left(p-1\right)\sqrt{2}-\tau_{3}(p)}\text{, for all }n\leq g(p)-2
\]
and while $\tau_{3}(p)>0$, then $d_{n}>\sqrt{2}/\!(p-1)$.
\end{proof}

We denote the lower bound of improvements by
\[
I_{p}=d_{g(p)-2}-d'_{p}.
\]

By performing the calculations, we get particular values such as $I_{3}>8.235\cdot10^{-11}$,
$I_{4}>1.276\cdot10^{-79},\ldots$. 
This is a rough approximation while condition (\ref{eq:tp_separated})
needs to remain true for all $p$ and does not allow us to further improve
the packing for particular $p$. In the next section, we show
that $I_{p}$ is usually above this approximation.

We did not find a way to improve the packings when $n=g(p)-1$, and
we conjecture that in this case $d_{n}=d'_{p}$.


\section{Optimization Approach}

Most of the existing packing methods focus on searching for a completely new arrangement of spheres,
usually performing a search from a randomly given initial position of spheres.
Such approach assumes that the packing of higher density can be reached
after a certain number of iterations and multiple runs of the search procedure using different
initial parameters \cite{gensane,wenqi,locatelli_de}.
The large number of iterations often limits the search procedure to the use of
double or quadruple floating-point precision, to maintain computation speed.
This precision is insufficient to detect improvements in many packings. Many of these approaches are
adapted and modified from widely known procedures for packing congruent
circles in a square or a circle \cite{csq_graham,cic_locatelli,csq_markot,csq_markot_2,csq_book}.

The method we suggest is based on a hypothesis that an improved packing
can be reached just by the omission of two or more spheres from the
ccp and by performing a translation of spheres using the available
space made after we remove the spheres.

The initial positions of the sphere centers we denote by $S_{p,r}\subset G_{p}$,
as a set of $g(p)-r$ points such that at least one sphere with a
center $s_{i}\in S_{p,r}$ can be continuously translated inside a
cube container without overlapping with other spheres. More precisely,

\[
\bigcup_{i=1}^{g(p)-r}\left\{ q\in C_{p}:q\notin S_{p,r},(S_{p,r}\setminus\left\{ s_{i}\right\} )\cup\left\{ q\right\} \in\mathfrak{\mathcal{\mathcal{F}}}\right\} \neq\varnothing.
\]

We can see that the construction of $S_{p,r}$ is possible for $r\geq2$,
and we experimentally determine solutions based on improvements
reached for certain arrangements. To simplify the search procedure,
instead of trying to figure out the best performing arrangements $S_{p,r}$
for each pair $(p,r)$, we find removal patterns $R_{r}=G_{p}\setminus S_{p,r}$
and use them to search for improvements for any $p$.

Table 1 shows patterns in a simplified notation where $R'_{r}=\left\{ \frac{\sqrt{2}}{2}s:s\in R_{r}\right\} $.
We notice that only $R_{2}$ is the most likely an optimal pattern for
all $p>1$.

\begin{table}[H]
\begin{centering}
\begin{tabular}{cl}
$r$ & $R'_{r}$\tabularnewline
\hline 
2 & $\left\{ \langle0,1,1\rangle,\langle1,1,0\rangle\right\} $\tabularnewline
3 & $\left\{ \langle0,0,0\rangle,\langle1,0,1\rangle,\langle2,0,0\rangle\right\} $\tabularnewline
4 & $\left\{ \langle0,1,1\rangle,\langle1,1,0\rangle,\langle1,1,2\rangle,\langle2,1,1\rangle\right\} $\tabularnewline
5 & $\left\{ \langle0,1,1\rangle,\langle1,0,1\rangle,\langle1,1,0\rangle,\langle2,0,0\rangle,\langle2,1,1\rangle\right\} $\tabularnewline
6 & $\left\{ \langle0,0,0\rangle,\langle0,1,1\rangle,\langle1,0,1\rangle,\langle1,1,0\rangle,\langle2,0,0\rangle,\langle2,1,1\rangle\right\} $\tabularnewline
\hline 
\end{tabular}
\par\end{centering}

\caption{Experimentally determined patterns $R'_{r}$}

\end{table}

Using the initial arrangement $S_{p,r}$ we try to perform
the translation of each sphere using the limited set of translation
vectors denoted by $T$. This algorithm can be described
as follows: 

\newpage
\begin{itemize}
\item For a given initial set $S_{n}\gets S_{p,r}$ repeat until $D(S_{n})<D_{p}$:

\begin{itemize}
\item \vspace{-3mm} For each $s_{i}\in S_{n}$ do:

\begin{itemize}
\item \vspace{-1mm}Randomly choose $t\in T$,
\item Let $v=\left\{ s_{i}+kt:k\in\left[-1,1\right]\right\}$,
\item Let $v_{i}=\left\{ q\in v:(S_{n}\setminus\left\{ s_{i}\right\} )\cup\left\{ q\right\} \in\mathcal{F},D(S_{n}\cup\{q\})=D(S_{n})\right\}$,
\item Find endpoints $a$ and $b$ of the largest line segment $\overline{ab}\subseteq v_{i}$ such that $s_{i}\in\overline{ab}$,
\item New position of $s_{i}$ is given as $\ensuremath{s_{i}\gets(a+b)/2}$.
\end{itemize}
\end{itemize}
\end{itemize}

After we try to move all points from $S_{n}$, we say that we completed one iteration.
Because of the very limited space to which we translate the spheres, and in order to
minimize the number of required iterations, we usually set 
$T=\left\{ 0,1\right\} ^{3}\setminus\left\{ \langle0,0,0\rangle\right\} $.
We also tested performances when $T$ takes different values such
as $\left\{ -1,0,1\right\} ^{3}$, more or less reduced sets, but
the improvements gained were always slightly worse.

In practice, $D(S_{p,r})$ is slightly larger than $D_{p}$ while
coordinates are given with finite precision. The described procedure
stops when the first improvement is detected, but if we continue the
search, we can improve the packing even more. It is also important
to set the precision above the expected value of $D_{p}-D(S_{p,r})$,
otherwise the improvement cannot be registered. Choosing the higher precision
enables us to reach an improvement with less iterations, but only
up to a certain level. We usually set the precision 1.5 times higher
than the expected improvement. If the precision is too high, the search
can be very slow, thus often we have to guess the range of possible
improvements using a lower precision at first, and increase it if
an improvement cannot be reached.

This approach is different from the procedures used for sphere packing
in the past, as in \cite{gensane,wenqi,locatelli_de,packomania},
while we focus only on tiny changes/improvements in the high density
structure. We cannot consider this approach as a good general packing
method for $r\geq p$. Its main weakness is that, because of the small
available space where spheres can be moved, random perturbations are
hard to implement, or at least we did not find any good method to
do it. Still, this method allows us to find improved packings in less
than one second for some well examined cases even when high precision
is not required, as for example $n=29$, 59 or 60.

The improvements attained are shown in the Tables 2 and 3. 
Table 2
lists values obtained for $I_{p}$ with $4\leq p\leq21$. Because of the
slow computation times for $p\geq13$, we ran a search with approximately
5000 iterations when improvement gains started to slow down. Table
3 lists other improved packings for $3\leq r\leq6$, $r<p$ and $4\leq p\leq12$
using $R_{r}$ patterns described in Table 1. The results are listed
as the best known values performed after a large number of
iterations and multiple runs of the search procedure.


\begin{table}[H]
\centering{}\textsuperscript{}%
\begin{tabular}{lll}
$n$ & $p$ & \quad $I_{p}$\tabularnewline
\toprule
30 & 4 & $7.34\cdot10^{-68}$\tabularnewline
61 & 5 & $7.18\cdot10^{-80}$\tabularnewline
106 & 6 & $2.26\cdot10^{-314}$\tabularnewline
170 & 7 & $9.09\cdot10^{-622}$\tabularnewline
254 & 8 & $3.74\cdot10^{-629}$\tabularnewline
363 & 9 & $7.51\cdot10^{-629}$\tabularnewline
498 & 10 & $5.00\cdot10^{-2584}$\tabularnewline
664 & 11 & $9.67\cdot10^{-2563}$\tabularnewline
862 & 12 & $1.76\cdot10^{-4988}$\tabularnewline
1097 & 13 & $1.70\cdot10^{-5020}$\tabularnewline
1370 & 14 & $2.01\cdot10^{-5044}$\tabularnewline
1686 & 15 & $1.29\cdot10^{-5076}$\tabularnewline
2046 & 16 & $1.78\cdot10^{-5116}$\tabularnewline
2455 & 17 & $3.30\cdot10^{-5047}$\tabularnewline
2914 & 18 & $1.47\cdot10^{-10118}$\tabularnewline
3428 & 19 & $4.20\cdot10^{-10121}$\tabularnewline
3998 & 20 & $1.16\cdot10^{-20344}$\tabularnewline
4629 & 21 & $2.46\cdot10^{-20582}$\tabularnewline
\bottomrule
\end{tabular}\caption{Improved values of $I_{p}$}
\end{table}

\vspace*{-\baselineskip}

\begin{table}[H]
\begin{centering}
\begin{tabular}{llll|llll}
$n$ & $p$ & $r$ & $d_{n}-d'_{p}$ & $n$ & $p$ & $r$ & $d_n - d'_p$  \\
\hline
29 & 4 & 3 & $2.23\cdot10^{-12}$ & & & & \\
\hline
59 & 5 & 4 & $1.95\cdot10^{-11}$ & & & & \\
60 & 5 & 3 & $2.09\cdot10^{-20}$ & & & & \\
\hline
103 & 6 & 5 & $3.38\cdot10^{-14}$ & & & & \\
104 & 6 & 4 & $9.98\cdot10^{-47}$ & & & & \\
105 & 6 & 3 & $1.34\cdot10^{-76}$ & & & & \\
\hline
166 & 7 & 6 & $3.08\cdot10^{-21}$ & 494 & 10 & 6 & $4.77\cdot10^{-57}$\\
167 & 7 & 5 & $7.72\cdot10^{-31}$ & 495 & 10 & 5 & $6.96\cdot10^{-199}$ \\
168 & 7 & 4 & $1.59\cdot10^{-87}$ & 496 & 10 & 4 & $1.72\cdot10^{-310}$ \\
169 & 7 & 3 & $8.49\cdot10^{-148}$ & 497 & 10 & 3 & $1.28\cdot10^{-605}$ \\
\hline
250 & 8 & 6 & $2.98\cdot10^{-28}$ & 660 & 11 & 6 & $2.83\cdot10^{-119}$\\
251 & 8 & 5 & $1.99\cdot10^{-43}$ & 661 & 11 & 5 & $1.99\cdot10^{-173}$\\
252 & 8 & 4 & $3.11\cdot10^{-102}$ & 662 & 11 & 4 & $1.40\cdot10^{-407}$\\
253 & 8 & 3 & $6.02\cdot10^{-153}$ & 663 & 11 & 3 & $3.47\cdot10^{-615}$\\
\hline
359 & 9 & 6 & $3.84\cdot10^{-28}$ & 858 & 12 & 6 & $8.40\cdot10^{-248}$\\
360 & 9 & 5 & $6.45\cdot10^{-44}$ & 859 & 12 & 5 & $2.68\cdot10^{-404}$\\
361 & 9 & 4 & $2.05\cdot10^{-101}$ & 860 & 12 & 4 & $7.74\cdot10^{-745}$\\
362 & 8 & 3 & $1.08\cdot10^{-152}$ & 861 & 12 & 3 & $2.91\cdot10^{-1212}$\\
\hline
\end{tabular}
\par\end{centering}
\caption{Improvements reached using patterns $R_{r}$}
\end{table}

\bibliographystyle{plain}

\end{document}